



\documentclass[algorithms,conceptpaper,accept,moreauthors,pdftex,12pt,a4paper]{mdpi} 
\setcounter{page}{1}
\lastpage{13}
\doinum{10.3390/------}
\pubvolume{xx}
\pubyear{2015}
\history{Received: xx / Accepted: xx / Published: xx}
\pdfoutput=1


\usepackage{setspace}
\usepackage{graphicx}
\usepackage{subfigure}
\usepackage{url}
\usepackage{verbatim}
\usepackage{algorithm}
\usepackage{algorithmic}
\usepackage{array}
\usepackage{amsmath}
\usepackage{amssymb}
\usepackage{latexsym}
\usepackage{algorithm,algorithmic}
\usepackage{lmodern}
\theoremstyle{mdpi}
\newcounter{thm}
\setcounter{thm}{0}
\newcounter{ex}
\setcounter{ex}{0}
\newcounter{re}
\setcounter{re}{0}
\newtheorem{Theorem}[thm]{Theorem}

\newtheorem{Definition}[thm]{Definition}


\Title{Nearest Neighbor search in Complex Network for Community Detection}

\Author{Suman Saha $^{1,}$* and S.P. Ghrera $^{1}$}

\address{%
$^{1}$ Institute 1, Jaypee University of Information Technology 1, Waknaghat, Solan, Himachal, India}

\corres{suman.saha@juit.ac.in}

\abstract{Nearest neighbor search is a basic computational tool used extensively in almost research domains of computer science specially when dealing with large amount of data. However, the use of nearest neighbor search is restricted for the purpose of algorithmic development by the existence of the notion of nearness among the data points. The recent trend of research is on large, complex networks and their structural analysis, 
where nodes represent entities and edges represent any kind of relation between entities. Community detection in complex network is an important problem of much interest.  In general, a community detection algorithm represents an objective function and captures the communities by optimizing it to extract the interesting communities for the user.
In this article, we have studied the nearest neighbor search problem in complex network via the development of a suitable notion of nearness. Initially, we have studied and analyzed the exact nearest neighbor search using metric tree on proposed metric space constructed from complex network. After, the approximate nearest neighbor search problem is studied using locality sensitive hashing. 
For evaluation of the proposed nearest neighbor search on complex network we applied it in community detection problem. The results obtained using our methods are very competitive with most of the well known algorithms exists in the literature and this is verified on collection of real networks. On the other-hand, it can be observed that time taken by our algorithm is quite less compared to popular methods.
}

\keyword{Complex Network; Nearest Neighbor; Metric tree; Locality Sensitive Hashing; Community Detection}



\begin{document}


\section{Introduction}
\label{nn-int}
Nearest neighbor (NN) search is an important computational primitive for structural
analysis of data and other query retrieval purposes. NN search is very useful for dealing
with massive datasets, but it suffer with "curse of dimensionality"\cite{UHLMANN1991, VIDALRUIZ1986}. 
However, some 
recent surge of results show that it is also very efficeint for high dimensional data
provided a suitable space partitioning data structure is used, like, kd-tree, quad-tree, R-tree, metric-tree
and locality sensitive hashing\cite{Panigrahy:2006, Indyk:1998, Gionis:1999, Dasgupta:2008}. 
Some of these data structures also support approximate
nearest neighbor search which hardly made any degradation of results whereas saves lot of
computational times. In the NN-search problem, the goal is to pre-process a set of data points, so
that later, given a query point, one can find efficiently the data point nearest to the
query point on some metric space of consideration. NN search has many applications 
in data processing and analysis. For instance, information retrieval, searching image databases, finding duplicate pages, compression, and many others.
To represent the objects and the similarity measures, one often uses geometric notions of nearness
\cite{Akoglu:2014, Liu04}.

One important research direction of recent interest is to extract  network communities in large real graphs such as social networks, web, collaboration networks and bio-networks \cite{Freeman78centralityin, CarringtonScott:2005, newman2003structure, radicchi2004}. 
The availability of large, detailed datasets
representing such networks has stimulated extensive study of their basic properties, and the 
identification of hirarchical structural features\cite{newman2003structure,Fortunato10}. 
Other than graphs, the complex networks are characterized by small average path length and high clustering coefficient.
A network community (also known as a module or cluster) is typically a group of nodes with more interconnections among its members than the remaining part of the network \cite{weiss, Schaeffer200727, Fortunato10}.
To extract such group of nodes from a network one generally selects an objective function that captures the possible communities as a set of nodes with better internal connectivity than external
\cite{NewGir04,Luxburg07}.
However, very less research is done for network community detection which tries to develop nearness among the 
nodes of a complex network and use nearest neighbor search for partitioning the 
network\cite{Pons04, duch-2005, Chakrabarti04a, MacropolS10, LevoratoP11, BrandesGW03,bullmore2009complex}.
Complex networks are characterized by small average path length and high clustering coefficient the way the metric is defined should be able to capture the crucial properties of complex networks. Therefore, we need to create the metric very carefully so that it can explore the underlying community structure of the real life networks\cite{SahaG15}.

In this work, we have developed the notion of nearness among the nodes of the network using some new matrices derived from modified adjacency matrix of the graph which is flexible over the networks and can be tuned to enhance the structural properties of the network required for community detection. 
The main contributions of this work include:

\begin{itemize}
\item Define a metric on complex network suitable for community detection

\item Choice of data structure  and approximations for NN computation 
(M-tree, LSH) 

\item Community detection algorithm developed using NN search

\item Experiments on real networks
\end{itemize}

The rest of this paper is organized as follows:In Section \ref{nn-cn} several definitions and challanges relevent to nearest neighbor search in complex network are discussed. Section \ref{nn-near} describes the notion of nearness in complex network and developed a method to represent a complex network as points of a metric space. Section \ref{nn-mtree} describes the problem of nearest neighbor search over complex network and the use of metric tree data structure in this regard.
In Section \ref{nn-approx}, the problem approximate nearest neighbor search on complex network is discussed with a newly developed locality sensitive hashing method. Network community detection using exaxt and approximate nearest neighbor search is formulated and several possible solutions are presented in Section \ref{nn-comm}, also, the initialization procedures, termination criteria, convergence are discussed in detail.  
The results of comparison between community detection algorithms are illustrated in Section \ref{nn-res}. The computational aspects of the proposed framework are also discussed in this section. 

\section{NNSCN: Nearest neighbor search on complex network}
\label{nn-cn}

The nearest-neighbor searching problem is to find the nearest points in a $D$ dimensional dataset $X \subset R^D$ containing $n$ points to a query point $q \in R^D$ , usually in a metric space. It
has applications in a wide range of real-world settings, in particular pattern recognition,
machine learning and database querying to name a few. Several effective methods exist for this
problem when the dimension $D$ is small, such as Voronoi diagrams,
however, kd-trees and metric trees are common when the dimension is high. 
Many methods with different approach are developed for searching data
and finding the nearest point. Searching the nearest neighbor in different studies 
are presented by different names such as post office problem, proximity search, closest point
search, best match file searching problem, index for similarity
search, vector quantization encoder, the light-bulb problem
and etc.. The solutions for the Nearest Neighbor Search
(NNS) problem usually have two parts: nearness determination in the data and
and algorithmic developments. In most the NNS
algorithms, the main framework is based on four fundamental
algorithmic ideas: Branch-and-bound, Walks, Mapping-based
techniques and Epsilon nets. There are thousands of possible
framework variations and any practical application can lead to
its unique problem formalization such as pattern recognition,
searching in multimedia data, data compression,
computational statistics, information retrieval, databases and
data mining, machine learning, algorithmic theory,
computational geometry, recommendation systems and etc.

\subsection{NN problem definition on complex network}
A NNS problem defined in a metric space is defined below.
\begin{Definition}[Metric space] Given a set $S$ of points and $d$ as
a function to compute the distance between two points. Pair
$(S, d)$ distinguished metric space if $d$ satisfies
 reflexivity, non-negativity, symmetry and triangle
inequality.
\end{Definition}
Non-metric space data are indexed by special data structures
in non-metric spaces and then searching is done on these
indexes. A few efficient methods exist for searching in non-
metric space that in most of them, non-metric space is
converted to metric space. 
The focus of this paper is on the problems defined on a metric space. In
a more detailed classification, NNS problems can be defined
in Euclidean space as follow:

\begin{Definition}[Nearest neighbor search] Given a set $S$ of points in a $D$ dimensional 
space, construct a data structure
which given any query point finds the point in $S$ with
the smallest distance with $q$.
\end{Definition}

This definition for a small dataset with low dimension has sub
linear (or even logarithmic) query time, but for massive
dataset with high dimension is exponential. Fortunately,
some little approximation can decrease the exponential complexity into
polynomial time.

Approximate NNS is defined as:

\begin{Definition}[Approximate nearest neighbor] Given a set $S$
of Points in a $D$-dimensional space, construct a
data structure which given any query point, reports
any point within distance at most $c$ times the nearest distance from $q$.
\end{Definition}
The first requirement in order to search in a metric space is
the existence of a formula to calculate the distance
between each pair of objects in $S$. Different metric distance
functions can be defined depending on the search space of consideration. 
A NN query on a complex network G, consists of a
source node $s$ and a metric function $d(x, y)$.  
This computations depends on the dimension of the instance and face "curse of dimensionality" problem.
The computation can be reduced drastically if instead of computing the exact nearest neighbor we compute the approximate nearest neighbor.

\section{Notion of nearness in complex network }
\label{nn-near}
The notion of nearness among the nodes of a graph are used in several purposes in the history fo literature of graph theory. Most of the time the shortest path and edge connectivity are popular choice to describe nearness of nodes. However, that edges do not give the true measure of network connectivity (proof by kleinbarg). 
The notion of network connectivity some times generalized to be the number of paths, of any length, that
exist between two nodes. This measure, called influence
by sociologists, because it measures the ability of one node to affect another, gives a better measure of connectivity between nodes of real life graphs / complex networks. Beside discovering natural groups within a network, the influence metric
can also help identify the weak ties who bridge different communities. 
Research in this direction gained special attention in the domain of complex network analysis, some of them along with the one proposed in this article are discussed in the following subsections.
\subsection{Nearness literature}
\label{near-lit}

Methods based on node neighborhoods. For a node $x$, let $N(x)$ denote the set of neighbors
of $x$ in a graph $G(V, E)$ . A number of approaches are based on the idea that two nodes $x$ and 
$y$ are more likely to be affected by one another if their sets of neighbors 
$N(x)$ and $N(y)$ have large overlap. 

Common neighbors: The most direct implementation of this idea for nearness computation is to define
$d(x, y) := |N(x) \cap N(y)|$, the number of neighbors that $x$ and $y$ have in common. 

Jaccard coefficient: The Jaccard coefficient, a commonly used similarity
metric,  measures the probability that both $x$ and $y$ have a feature $f$ ,
for a randomly selected feature $f$ that either $x$ or $y$ has. If we take features here to be neighbors
in $G(V, E)$ , this leads to the measure $d(x, y) := |N(x) \cap N(y)|/|N(x) \cup N(y)|$. 

Preferential attachment:The probability that a new edge involves node $x$ is proportional
to $|N(x)|$, the current number of neighbors of $x$. The probability of co-authorship of $x$ and $y$ is
correlated with the product of the number of collaborators of $x$ and $y$. This corresponds to the
measure $d(x, y) := |N(x)| \times |N(y)|$.

Katz measure: This measure directly sums over the collection of paths, exponentially
damped by length to count short paths more heavily. This leads to the measure
$d(x, y) := \beta \times |paths(x,y) | $
where $paths(x,y)$ is the set of all length paths from $x$ to $y$. ($\beta$ determines the path size, since paths of length three or more contribute very little to the summation.)

Hitting time and PageRank: A random walk on $G$ starts at a node $x$, and
iteratively moves to a neighbor of $x$ chosen uniformly at random. The hitting time $Hx,y$ from $x$ to $y$
is the expected number of steps required for a random walk starting at $x$ to reach $y$. Since the hitting
time is not in general symmetric, it is also natural to consider the commute time $C(x,y) := H(x,y) + H(y,x)$.
Both of these measures serve as natural proximity measures, and hence (negated) can be used as
$d(x, y)$. Random resets form the basis of the PageRank measure for Web pages, and we can adapt it
for link prediction as follows: Define $d(x, y)$ to be the
stationary probability of $y$ in a random walk that returns to $x$ with probability $\alpha$ each step, moving
to a random neighbor with probability $1 - \alpha$.

Most of the methods are developed for different types of problems like information retrieval, ranking, prediction e.t.c. and developed for general graphs. In this article we studied a measure specially designed for complex network and discussed in the next subsection 
\subsection{Proposed metric on complex network}
\label{nn-metric}
In this section we have demonstrated the procedure to transform a graph into points of a metric space and developed the methods of community detection with the help of metric defined for pair of points. We have also studied and analyzed the community structure of the network therein.  

The nodes of the graph do not lie on a metric space. The standard Euclidean distance and spherical distance define over the adjacency or Laplacian matrices above failed to capture similarity information among the nodes of a complex network. On the other-hand, the algorithms developed based on shortest path or Jaccard similarity are computationally inefficient and have less success in terms of standard evaluation criteria(like, conductance and modularity). 

In this work, we have tried to develop the notion of similarity among the nodes using some new matrices derived from adjacency matrix and degree matrix of the graph.
Let $A$ be the adjacency matrix and $D$ the degree matrix of the graph $G= (V, E)$.  The Laplacian $L = D -A$. We have defined two diagonal matrix of same size $D (\lambda)$ and $D (\lambda_{x})$ where $\lambda$ is a parameter determine from the given graph and can be optimized from the optimization criteria of the problem under consideration.
In $D(\lambda)$ a fixed optimally determine value is used in the diagonal entries of the matrix $D$ and in $D(\lambda_{x})$ a variable value also optimally determine  is used in the diagonal entries of the matrix $D$.
The similarities are defined on matrices $L_1$ and $L_2$, where $L_1 = D(\lambda) +A$ and $L_2 = D(\lambda_{x}) +A$ respectively as spherical similarity among the rows and determine by applying a concave function $\phi$ over the standard notions of similarities like, Pearson coefficient($\sigma_{PC}$), Spacerman coefficient($\sigma_{SC}$) or Cosine similarity($\sigma_{CS}$). $\phi(\sigma)()$ must be chosen using the chord condition to obtain a metric.

In this subsection we have demonstrated the algorithm to convert the nodes of the graph to the points of a metric space preserving the community structure of the graph.
The algorithm depends on the sub modules 1) construction of $L_x$ ($L_1$ or $L_2$) and 2) obtaining a structure preserving distance function. The algorithm works as picking pair of nodes from $L_x$ and and computing distance defined in the second module.

\subsubsection{$L_x$ construction} The $L_1$ is defined as $L_1 = D(\lambda) +A$, where $A$ is the adjacency matrix of the given network and $D(\lambda)$ is a diagonal matrix of same size with diagonal values equal to a non negative constant $\lambda$.

The $L_2$ is defined as $L_2 = D(\lambda_{x}) +A$, where $A$ is the adjacency matrix of the given network and $D(\lambda_{x})$ is a diagonal matrix of same size with diagonal values determine by a non negative function
 $\lambda_{x}$ of the node $x$.

The choice of $\lambda$ and $\lambda_{x}$ plays a crucial role in combination with the function chosen in the second module for determination of a suitable metric and is discussed later part of this subsection.

\subsubsection{Function selection} The function selection module determine the metric for pair of nodes. 
The function selector $\phi$ converts a similarity function (Pearson coefficient($\sigma_{PC}$), Spacerman coefficient($\sigma_{SC}$) or Cosine similarity($\sigma_{CS}$)) into a distance matrix. In general the similarity function satisfies the positivity and similarity condition of metric but not triangle inequality.
$\phi$ is a metric preserving ($\phi(d(x_i,x_j)= d_{\phi}(x_i,x_j)$), concave and monotonically increasing function. The three conditions above refer to as chord condition. The $\phi$ function is chosen to have minimum internal area with chord.

\subsubsection{Choice of $\lambda$ and $\phi(\sigma)()$} The choices in the above sub modules play a crucial role  in the graph to metric transformation algorithm to be used for community detection. The complex network is characterized by small average diameter and high clustering coefficient. Several studies on network structure analysis reveal that there are hub nodes and local nodes characterizing the interesting structure of the complex network.
 Suppose we have taken $\phi= arccos$, $\sigma_{CS}$ and constant $\lambda \geq 0$. $\lambda = 0$ penalize the effect of direct edge in the metric and is suitable to extract communities from highly dense graph. 
$\lambda = 1$ place the similar weight of direct edge and common neighbor reduce the effect of direct edge in the metric and is suitable to extract communities from moderately dense graph. $\lambda = 2$ set more importance to direct edge than common neighbor (this is the common case of available real networks). 
$\lambda \geq 2$ penalize the effect of common neighbor in the metric and is suitable to extract communities from very sparse graph. 
The choice of $\lambda$ depends on the input graph, i.e. whether it is sparse or dense and its cluster structure. A more detailed explanation on the metric described above can be obtained in \cite{SahaG15}.

\section{Nearest neighbor search on complex network using metric tree}
\label{nn-mtree}
There are a large number of methods developed to compute nearest neighbor search.
However, finding nearest neighbor search on some data where dimension is high suffer
from curse of dimensionality. Some recent research on this direction revealed that
dimension constrained can be tackled by using efficient data structures like metric tree
and locality sensitive hashing. In this section we have explored metric tree to perform
nearest  neighbor search on complex network with the help of metric mapping of complex network
described in the previous section.

\subsection{Metric-tree}

A metric tree is a data structure specially designed to perform nearest neighbor
query for the points residing on a metric space and perform well on high dimension
particularly when some approximation is permitted. 
A metric tree organizes a set of points in a spatial hierarchical manner. It is a
binary tree whose nodes represent a set of points. The root node represents all points, and
the points represented by an internal node $v$ is partitioned into two subsets, represented by
its two children. Formally, if we use $N(v)$ to denote the set of points represented by node
$v$, and use $v.lc$ and $v.rc$ to denote the left child and the right child of node $v$, then we have
$N(v) = N(v.lc) \cup N(v.rc)$
$\phi = N(v.lc) \cap N(v.rc)$
for all the non-leaf nodes. At the lowest level, each leaf node contains very few points.

An M-Tree \cite{CPZ97} has these components and sub-components:
\begin{itemize}

\item Non-leaf nodes: A set of routing objects $N_{RO}$, Pointer to Node's parent object $v_p$.
\item Leaf nodes: A set of objects $N_v$, Pointer to Node's parent object $v_p$.
\item Routing Object: (Feature value of) routing object $v_r$, Covering radius $r(v_r)$, Pointer to covering tree $T(v_r)$, Distance of $v_r$ from its parent object $d(v_r,P(v_r))$ 
\item Object: (Feature value of the) object $v_j$, Object identifier oid$(v_j)$,
 Distance of $v_j$ from its parent object $d(v_j,P(v_j))$

\end{itemize}

Partitioning: The key to building a metric-tree is how to partition a node v. 
A typical way is as follows: 
We first choose two pivot points from $N(v)$, denoted as $v.lpv$ and $v.rpv$.
Ideally, $v.lpv$ and $v.rpv$ are chosen so that the distance between them is the largest of 
all distances within $N(v)$. More specifically, 
$||v.lpv - v.rpv|| = max_{p_1,p_2\in N(v)} ||p_1 - p_2||$.
However, it takes $O(n^2 )$ time to find the optimal $v.lpv$ and $v.rpv$. In practice a
linear-time heuristic is used to find reasonable pivot points. $v.lpv$ and $v.rpv$
are then used to partition node $v$.
We first project all the points down to the
vector $u = v.rpv - v.lpv$, and then find the median point A along $u$. Next, we assign all
the points projected to the left of A to $v.lc$, and all the points projected to the right of $A$
to $v.rc$. We use $L$ to denote the $d-1$ dimensional plane orthogonal to $u$ and goes through
$A$. It is known as the decision boundary since all points to the left of $L$ belong to $v.lc$
and all points to the right of $L$ belong to $v.rc$. By using a median point to
split the data points, we can ensure that the depth of a metric-tree is $log n$. 
Each node $v$ also has a hypersphere $B$, such that all points represented by $v$ fall in the ball
centered at $v.center$ with radius $v.r$, i.e. $N(v) \in B(v.center, v.r)$.

Searching: A search on a metric-tree is performed using a stack. The current radius $r$ is used to decide which child node to search first. If
the query $q$ is on the left of current point, then $v.lc$ is searched first, otherwise, $v.rc$ is searched first. At all times, the algorithm maintains a candidate NN and there distance determine the current radius, which is the nearest neighbor it finds so far while traversing the tree. 
We call this point $x$, and denote the distance between $q$ and $x$ by $r$. If algorithm is about to exploit a node $v$, but discovers that no member of $v$ can be within
distance $r$ of $q$, then it skip the subtree from $v$. This happens whenever $v.center - |q - v.r| \geq r$. In practice, metric tree search typically finds a very good NN
candidate quickly, and then spends lots of the time verifying that it is in fact the true
NN. However, in case of approximate NN we can save majority of time with moderate approximation guarantee. 
The algorithm for NN search using metric tree is given below \ref{alg-nn-m}.

\subsection{Nearest Neighbor search algorithm using M-Tree}
algorithm
\begin{algorithm}[H]
\begin{algorithmic}[1]
\REQUIRE $M= (V, d)$ $\&$ $q$
\ENSURE $d(q, v_q)$
\STATE Insert root object $v_r$ in stack
\STATE Set current radius as $d(v_r, q)$  
\STATE Successively traverse the tree in search of $q$ 
\STATE PUSH all the objects of traversal path into stack 
\STATE Update the current radius
\STATE If leaf object reached
\STATE POP objects from stack
\STATE For all points lying inside the ball of current radius centering $q$, verify for possible nearest neighbor and update the current radius.
\RETURN $d(q, v_q)$
\end{algorithmic}
\caption{NN search in M-Tree}
\label{alg-nn-m}
\end{algorithm}

\begin{Theorem}
Let  $M=(V,d)$, be a  bounded metric space. Then for any fixed data $V\in R^n$ of size $n$, and for constant
$c \geq 1$, $\exists \epsilon$   such that we may compute $d(q, V)|_\epsilon$   
with at most $c \cdot \left \lceil \log(n)+1 \right \rceil$  expected metric evaluations\cite{CPZ98b}
\end{Theorem}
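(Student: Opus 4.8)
The plan is to derive the bound in three stages: first pin down the height of the metric tree from its median-split construction, then convert the factor-$c$ approximation into a sharpened pruning rule, and finally bound the expected backtracking cost against a probabilistic cost model for the data and query. For the structural step, recall from the partitioning description above that every internal node $v$ is split at the \emph{median} point along the pivot direction $u = v.rpv - v.lpv$, so the two children receive (up to one) equal halves of $N(v)$; hence the tree on $n$ points is balanced with height $h \le \lceil \log n \rceil$. A single root-to-leaf descent therefore pushes at most $\lceil \log n \rceil$ routing objects onto the stack and performs at most $\lceil \log n + 1 \rceil$ evaluations of $d(q,\cdot)$ --- one per level to choose the child to enter, plus one at the leaf --- and by lines 3--5 of Algorithm~\ref{alg-nn-m} this descent already yields an initial candidate $x$ with current radius $r = d(q,x)$.

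Next I would fix $\epsilon$ through the relation $c = 1 + \epsilon$, so that reporting any $v$ with $d(q,v) \le c\, d(q,V)$ meets the Approximate Nearest Neighbor specification stated earlier. The effect on the search is a strengthened skip test: when a routing object $v_r$ with covering radius $r(v_r)$ is popped from the stack, the entire covering subtree $T(v_r)$ is discarded as soon as $d(q,v_r) - r(v_r) \ge r/c$. Since the triangle inequality gives $d(q,p) \ge d(q,v_r) - r(v_r)$ for every $p \in T(v_r)$, no discarded subtree can contain a point closer than $r/c$ to $q$, so the candidate finally reported is within a factor $c$ of the true nearest distance --- i.e. it realises $d(q,V)|_\epsilon$. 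Boundedness of $(V,d)$ keeps every $r(v_r)$ finite so the test is always meaningful, and the cached parent distances $d(v_r,P(v_r))$ let it be applied through a further triangle-inequality estimate without a fresh metric evaluation on most popped nodes.

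The substantive step, and the one I expect to be the main obstacle, is bounding the expected number of subtrees that survive the strengthened test. Under a distance distribution for the data points and the query (the cost model of \cite{CPZ98b}), the probability that an off-path sibling subtree rooted $k$ levels below the descent path survives pruning decays geometrically in $k$, because the $1/c$ slack confines $q$ to an ever-thinner annular shell as the covering radii contract with depth. Summing this geometric series over all off-path siblings shows that the expected number of extra node visits --- hence extra metric evaluations --- is at most a constant multiple of the tree height; absorbing that constant into $c$ gives the claimed $c \cdot \lceil \log n + 1 \rceil$ expected metric evaluations. Making this rigorous requires committing to a precise probabilistic model (without one, ``expected'' is undefined), checking that median splitting really does contract the covering radii fast enough down the tree, and controlling the dependence between the ``sibling survives'' events at different levels; these are exactly the points handled by the cost-model analysis of \cite{CPZ98b}.
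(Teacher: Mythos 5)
The first thing to note is that the paper does not actually prove this theorem: it is stated with a citation to \cite{CPZ98b} and imported wholesale from that cost-model analysis, so the only meaningful comparison is against that external argument. Your first stage (median splits give a balanced tree, so one root-to-leaf descent costs $\left\lceil \log n + 1 \right\rceil$ distance evaluations and yields an initial candidate, as in Algorithm~\ref{alg-nn-m}) is fine but is the easy part. The heart of the claim --- that the \emph{expected} number of evaluations beyond the descent is at most a constant multiple of the height --- is precisely what your third paragraph asserts rather than proves: the ``geometric decay in $k$'' of the survival probability of off-path subtrees is claimed from an unspecified probabilistic model, no contraction rate of covering radii under median splitting is established (median splits balance cardinalities, not radii, and in a general bounded metric space sibling balls can overlap heavily at every level), and the dependence between survival events at different levels is not controlled. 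You concede all of this and defer it to the cost-model analysis of \cite{CPZ98b} --- which means the proposal does not prove the statement; it reduces it back to the very reference the paper cites, leaving a genuine gap at the central step.

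There is also a quantifier problem in how you fix $\epsilon$. In the statement, $c$ is the constant in the evaluation budget $c \cdot \left\lceil \log(n)+1 \right\rceil$ and the theorem asserts that \emph{for that budget} there exists an $\epsilon$ (allowed to depend on $c$, $n$ and the fixed data) making the approximate answer computable within it; setting $c = 1+\epsilon$ conflates the budget constant with the approximation factor. Under your reading, $c=1$ would promise an exact nearest neighbor in $\left\lceil \log n + 1 \right\rceil$ expected evaluations, which your pruning argument cannot deliver in a general metric space. Boundedness of $(V,d)$ is what makes the intended reading work: since all distances are bounded by the diameter, $\epsilon$ can always be chosen large enough (relative to the diameter or to the distance distribution of the cost model) that the candidate returned by the initial descent already satisfies the $(1+\epsilon)$ guarantee, so essentially no backtracking is charged and only the slack constant $c$ governs how small an $\epsilon$ one can afford. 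You instead use boundedness only to say the covering radii are finite, which plays no role in counting metric evaluations.
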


\section{Nearest neighbor search on complex network using locality sensitive hashing}
\label{nn-approx}

Metric trees, so far represent the practical state of the art for
achieving efficiency in the largest dimensionality possible. However, many real-world 
problems are posed with very large dimensionality which are beyond the capability
of such search structures to achieve sub-linear efficiency. Thus, the high-dimensional case
is the long-standing frontier of the nearest-neighbor problem.

The approximate nearest neighbor can be computed very efficiently using Locality sensitive hashing.

\subsection{Approximate nearest neighbor}

Given a metric space $(S, d)$ and some finite subset $S_D$ of data points$S_D \subset S$ on which 
nearest neighbor queries are to be made, our aim to organize $S_D$ s.t. NN queries can be answered more efficiently.
 For any $q\in S$, NN problem consists of finding single minimal located point $p\in S_D$ s.t. $d(p,q)$ is minimum over all $p\in S_D$. We denote this by $p= NN(q, S_D)$.

An $\epsilon $ approximate NN of $q\in S$ is to find a point $p\in S_D$ s.t. $d(p,q) \leq (1+\epsilon) d(x,d)$ $\forall $ $x\in S_D$.

\subsection{Locality Sensitive Hashing (LSH)}
Several methods to compute first nearest neighbor query exists in the
literatures and locality-sensitive hashing (LSH) is most popular
because of its dimension independent runtime \cite{MotwaniNP07, Pauleve:2010}. 
In a locality sensitive hashing, the hash function has the property that close points are hash into 
same bucket with high probability and distance points are hash into same bucket with low
probability. 
Mathematically, a family $H = \{ h : S \rightarrow U \}$ is called $(r_1, r_2, p_1, p_2)$-sensitive 
if for any $p, q \in S$
\begin{itemize}
\item if $p \in B(q,r_1)$ then $ Pr_H[h(q)=h(p)] \geq p_1 $
\item if $p \notin B(q,r_2)$ then $ Pr_H[h(q)=h(p)] \leq p_2 $
\end{itemize}
where $B(q, r)$ denotes a hypersphere of radius $r$ centered at $q$. 
In order for a locality-sensitive family to be useful, it has to satisfy inequalities 
$p_1 > p_2$ and $r_1 < r_2$
when $D$ is a dissimilarity measure, or $p_1 > p_2$ and $r_1 > r_2$ when $D$ is a 
similarity measure\cite{Indyk98,Gionis}. The value of $\delta = log(1/P_1 )/log(1/P_2 )$ determines search performance of LSH. Defining a LSH as $a (r, r(1 + \epsilon), p1 , p2 )$, the $(1 + \epsilon)$ NN problem can be solved via series of hashing and searching within the buckets \cite{JolyB08, Datar:2004, Gionis:1999}.

\subsection{Locality sensitive hash function for complex network}
In this sub-section, we discuss the existence of locality sensitive hash function families for
the proposed metric on complex network. The LSH data structure stores all nodes in hash tables and
searches for nearest neighbor via retrieval. The hash
table is contain many buckets and identified by bucket id. 
Unlike conventional hashing the LSH approach try to maximize
the probability of collision of near items and put them into same bucket. 
For any given the query $q$ the bucket $h(q)$  considered to search nearest node.
In general $k$ hash functions are chosen independently and
uniformly at random from hash family $H$. The output of nearest neighbor
query is provided from the union ok $k$ buckets. The consensus of $k$ functions
reduces the error of approximation.
For metric defined in the previous section \ref{nn-near} we considered
$k$ random points from the metric space. Each random point $r_i$ define a hash function
$h_{i}(x) = sign(d(x, r_i))$, where $d$ is the metric and $i \in [1,k]$. These randomized hash functions
are locality sensitive \cite{Andoni:2008, Charikar:2002}.

\begin{algorithm}[H]
\begin{algorithmic}[1]
\REQUIRE $M= (V, d)$ $\&$ $q$
\ENSURE $d(q, V)$
\STATE Identify buckets of query point $q$ corresponding to different hash functions.
\STATE Compute nearest neighbor of $q$ only for the points inside the selected buckets.
\RETURN $d(q, V)$
\end{algorithmic}
\caption{NN search in LSH}
\label{alg4}
\end{algorithm}

\begin{Theorem}
Let  $M=(V,d)$, be a  bounded metric space. Then for any fixed data $V\in R^n$ of size $n$, and for constant
$c \geq 1$, $\exists \epsilon$   such that we may compute $d(q, V)|_\epsilon$   
with at most $mn^{O(1/\epsilon )}$  expected metric evaluations, where $m$ is the number of dimension of the metric space. In case of complex network $m=n$ so expected time is $n^{O(2/\epsilon)}$
\cite{CPZ98b, Indyk_asublinear}.
\end{Theorem}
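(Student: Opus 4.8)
\emph{Proof proposal.} The plan is to run the classical amplification argument for locality-sensitive hashing, specialised to the family $h_i(x)=\mathrm{sign}(d(x,r_i))$ introduced above for the complex-network metric. First I would fix a query radius $r$ and put $r_1=r$, $r_2=(1+\epsilon)r$; by the locality-sensitivity of the $h_i$ (cited from \cite{Andoni:2008, Charikar:2002}) there exist probabilities $p_1>p_2$ with $\Pr_H[h(q)=h(p)]\ge p_1$ whenever $d(p,q)\le r_1$ and $\Pr_H[h(q)=h(p)]\le p_2$ whenever $d(p,q)\ge r_2$, so the family is $(r_1,r_2,p_1,p_2)$-sensitive. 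Set $\rho=\log(1/p_1)/\log(1/p_2)$, which is exactly the quantity $\delta$ appearing just before Algorithm \ref{alg4}.

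Next I would build the index as a collection of $L$ independent hash tables, the $j$-th keyed by a concatenation $g_j=(h_{j,1},\dots,h_{j,k})$ of $k$ independent members of $H$. Choosing $k=\lceil \log_{1/p_2} n\rceil$ forces the expected number of $r_2$-far points landing in the query bucket of a single table to be at most $n\,p_2^{k}\le 1$, hence $O(L)$ across all tables. Choosing $L=\lceil n^{\rho}\rceil$ makes the probability that a fixed $r_1$-near point collides with $q$ in at least one table at least $1-(1-p_1^{k})^{L}\ge 1-1/e$, a constant; boosting by $O(\log n)$ independent repetitions (or, as in Algorithm \ref{alg4}, simply inspecting the union of the selected buckets and stopping after $\Theta(L)$ candidates) makes this succeed with high probability. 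Conditioned on this event, the point returned lies within distance $(1+\epsilon)r$ of $q$, i.e. it is an $\epsilon$-approximate nearest neighbour $d(q,V)|_\epsilon$.

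I would then account for the cost. A query touches $L$ buckets; computing the $kL$ hash values costs one metric evaluation each, since each $h_i$ is the single distance $d(x,r_i)$; and screening the $O(L)$ spurious candidates costs one metric evaluation each. With $k=O(\log n)$ this is $O(n^{\rho}\log n)$ metric evaluations, and in the paper's cost model each such evaluation over the $m$-dimensional representation of a point (a row of the $n\times n$ matrix $L_x$ of Section \ref{nn-near}) carries an $O(m)$ factor, giving $O(m\,n^{\rho}\log n)$. Writing $\rho$ in terms of the approximation parameter via the bound $\rho\le 1/(1+\epsilon)=O(1/\epsilon)$, valid for projection-type families of this kind, absorbs the logarithm into the exponent and yields $m\,n^{O(1/\epsilon)}$. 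Finally, for a complex network each node is an $n$-vector, so $m=n$, and substitution gives $n\cdot n^{O(1/\epsilon)}=n^{O(2/\epsilon)}$, as claimed.

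The main obstacle is the exponent estimate $\rho=\log(1/p_1)/\log(1/p_2)\le 1/(1+\epsilon)$: it does not follow from locality-sensitivity alone but from the explicit form of $p_1,p_2$ for the $\mathrm{sign}\circ d$ family, which in turn rests on the concavity and chord properties of $\phi$ used to build $d$ in Section \ref{nn-metric}. Quantifying how slowly the collision probabilities degrade as the separation ratio $r_2/r_1=1+\epsilon$ grows is where the real content lies; everything else is standard LSH bookkeeping. A secondary subtlety is that the statement only claims existence of a suitable $\epsilon$, so one must be careful to identify the range of approximation factors for which the chosen $k$ and $L$ genuinely produce candidate lists of size sublinear in $n$.
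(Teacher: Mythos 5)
The paper does not actually prove this theorem: it is stated bare, justified only by the citations \cite{CPZ98b, Indyk_asublinear}, so there is no internal argument to compare yours against. Judged on its own terms, your reconstruction is the standard Indyk--Motwani amplification (concatenations of width $k=\lceil\log_{1/p_2}n\rceil$, $L=\lceil n^{\rho}\rceil$ tables, screening of $O(L)$ candidates), and that bookkeeping is fine. But the step you yourself flag as ``the main obstacle'' is exactly the gap, and it is not a minor one: the bound $\rho=\log(1/p_1)/\log(1/p_2)\le 1/(1+\epsilon)$ is never established for the family the paper actually defines, and for that family it cannot hold. Since $d$ is a metric, $d(x,r_i)\ge 0$ for every $x$, so $h_i(x)=\mathrm{sign}(d(x,r_i))$ is constant (except at $x=r_i$); hence $p_1=p_2$ and the family is not locality sensitive at all. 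To make your argument go through you would have to substitute a genuine family --- e.g.\ threshold tests $h_i(x)=\mathbf{1}\left[d(x,r_i)\le t_i\right]$ with random $r_i,t_i$, or Charikar-type hyperplane tests applied to the rows of $L_x$ --- and then prove the collision-probability gap with quantitative dependence on $\epsilon$, using the concavity/chord construction of the metric of Section \ref{nn-metric}. Neither the paper nor your proposal supplies that, so the heart of the theorem remains assumed rather than proved.

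A second omission: an LSH scheme for a fixed pair $(r,(1+\epsilon)r)$ solves the near-neighbor decision problem, not approximate nearest neighbor; you fix $r$ at the outset and never give the reduction over scales (geometric sequence of radii with binary search, as in Indyk--Motwani or Har-Peled), which is where the boundedness of $M$ would actually be used and which adds further (logarithmic) overhead. Finally, $n\cdot n^{O(1/\epsilon)}=n^{1+O(1/\epsilon)}$ equals $n^{O(2/\epsilon)}$ only when $\epsilon=O(1)$; this is harmless but worth stating, especially since the theorem's quantifier ``$\exists\,\epsilon$'' is already loosely posed. In short, your template is the right generic one, but the substantive claim --- a locality-sensitive family with $\rho=O(1/\epsilon)$ for the proposed network metric --- is missing from both your argument and the paper, which simply defers to the cited works.
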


\section{Network community detection using nearest neighbor search}
\label{nn-comm}
Community detection in real networks aims to capture the structural organization of the network using the connectivity information as input\cite{NewGir04,Schaeffer200727}.
Early work on this domain was attempted by Weiss and
Jacobson while searching for a work group within a government agency\cite{weiss}.

Most of the methods developed for network community detection are based on a two-step approach.
The first step is specifying a quality measure (evaluation measure, objective function) that quantifies the desired properties of communities and the second step is applying an algorithmic techniques to assign the nodes of graph into communities by optimizing the objective function.

Several measures for quantifying the quality of communities have been proposed, they mostly consider that communities are set of nodes with many edges between them and few connections with nodes of different communities(e.g. modularity, conductance, expansion, internal density, average degree, triangle precipitation ratio,..,e.t.c. ). 

\subsection{Popular algorithms}
In this subsection we have given a brief list of the algorithms developed for network community detection purposes. The broad categorization of the algorithms are based on graph traversal, semidefinite programming and spectral. The basic approach and the complexity of very popular algorithms are listed in the table \ref{cnalgo}.
There are more algorithms developed to solve network community detection problem a complete list can be obtained in several survey articles \cite{Fortunato10, LeskovecLM10, YangL12}. 

A partial list of algorithms developed for network community detection purpose is tabulated in \ref{cnalgo}. The algorithms are categorized into three main group as spectral (SP), graph traversal based (GT) and semi-definite programming based (SDP). The categories and complexities are also given in the table \ref{cnalgo}.   
\begin{table}[h!]
\caption{Algorithms for network community detection and their complexities}
\label{cnalgo}
\centering
\tiny
\begin{tabular}{|l || c ||l ||l |} 
 \hline
Author  & Ref.  & Cat. & Order  \\[0.5ex]\hline\hline
Van Dongen 	& (Graph clustering, 2000\cite{dongen})  & GT  & $O(nk^2 )$, $k < n$ parameter \\\hline
Eckmann \& Moses  &  (Curvature, 2002\cite{Eckmann2002})  &  GT  & $O(m k^2 )$  \\\hline
Girvan \& Newman  &  (Modularity, 2002\cite{girvan02})  &  SDP  &  $O(n^2 m)$ \\\hline
Zhou \& Lipowsky  & (Vertex Proximity, 2004\cite{ZhouL04}) & GT  &  $O(n^3 )$  \\\hline
Reichardt \& Bornholdt  & (spinglass, 2004\cite{spinglass})  &  SDP  & parameter dependent \\\hline
Clauset et al.  	& (fast greedy, 2004\cite{Clauset2004}) & SDP  & $O(n log_2 n)$ \\\hline
Newman \& Girvan  	& (eigenvector, 2004 \cite{NewGir04})  &  SP  &  $O(nm^2 )$ \\\hline
Wu \& Huberman  &  (linear time, 2004\cite{linear})  &  GT  &  $O(n + m)$\\\hline
Fortunato et al.  &  (infocentrality, 2004\cite{infocentrality})  &  SDP & $O(m^3 n)$ \\\hline
Radicchi et al.  &  (Radicchi et al., 2004\cite{radicchi2004})  &  SP  &  $O(m^4 /n^2 )$\\\hline
Donetti \& Munoz  &  (Donetti and Munoz, 2004\cite{Donetti})  &  SDP  &  $O(n^3 )$ \\\hline
Guimera et al. &  (Simulated Annealing, 2004\cite{Guimera04})  &  SDP  &  parameter dependent \\\hline 
Capocci et al.  &  (Capocci et al., 2004\cite{Capocci04})  &  SP  &  $O(n^2 )$ \\\hline
Latapy \& Pons 		& (walktrap, 2004\cite{Pons04}) & SP  & $O(n^3 )$  \\\hline
Duch \& Arenas  &  (Extremal Optimization, 2005\cite{duch05})  &  GT  &  $O(n^2 log n)$ \\\hline
Bagrow \& Bollt   &  (Local method, 2005\cite{Bagrow05})  &  SDP  &  $O(n^3 )$ \\\hline
Palla et al.  &  (overlapping community, 2005\cite{palla05})  &  GT  &  $O(exp(n))$ \\\hline
Raghavan et al.  &  (label propagation, 2007\cite{raghavan})  &  GT  &  $O(n + m)$\\\hline
Rosvall \& Bergstrom  & (Infomap, 2008\cite{rosvall2008random})  &  SP  & $O(m)$ \\\hline
Ronhovde \& Nussinov  &  (Multiresolution community, 2009\cite{Ronhovde09})  &  GT  &  $O(m\beta log n)$, $\beta \approx 1.3$ \\[1ex]\hline
\end{tabular}
\end{table}

\subsection{k-central algorithm for network community detection using nearest neighbor search}
\label{gtm-comm}
In this section we have described k-central algorithm for the purpose of network community detection by using the nearest neighbor search inside complex network. We have also studied and analyzed the advantages of the k-central method over the standard algorithm for network community detection.

\subsubsection{k-central algorithm}
\label{partition}
The community detection methods based on partitioning of graph is possible using nearest neighbor search, because the nodes of the graph are converted into the points of a metric space. This algorithm for network community detection converges automatically and does not compute the value of objective function in iterations therefore reduce the computation compared to standard methods. The results of this algorithm are competitive on a large set networks shown in section \ref{nn-res}. The k-central algorithm for community detection and its details analysis is given below. 

\subsubsection{k selection}
\label{k-selection} Determining the optimal number of k is an important problem for community detection researchers. An extensive analysis can be found in the work of Leskovec et al.  \cite{Leskovec2008}. The standard practice is to solve an optimization equation with respect to k for which the optimal value of the objective function is achieved. Another method based on farthest first traversal is also very useful in terms of computational efficiency \cite{Gonzalez85}. For small networks the global optimization works better and for very large network the second choice give the faster approximate solution.

\subsubsection{Initialization} The set of initial nodes are also very important problem for k-central algorithm
\begin{itemize}
\item Input: graph $G = (V,E)$, with the node similarity $sim(x_a,x_b)$ defined on it
\item Output: A partition of the nodes into $k$ communities $C_1, C_2,..., C_k $
\item Objective function: Maximize the minimum intra community similarity 
$$min_{j\in\{1,2,..,k\}}max_{x_a,x_b \in C_j} ~~ sim(x_a,x_b)$$ 

\end{itemize}

\begin{algorithm}[H]
\begin{algorithmic}[1]
\REQUIRE $M= (V, d)$ 
\ENSURE $T= \{C_1, C_2, \dots, C_k \}$ with minimum $cost(T)$
\STATE  Initialize centers $z_1, \dots, z_k \in R^n$ and clusters $T= \{C_1, C_2, \dots, C_k\}$ 
\REPEAT 
	\FOR{$i=1$ to $k$}	
		\FOR{$j=1$ to $k$}
			\STATE $C_i \leftarrow \{x \in V ~s.t.~ |z_i-x| \leq |z_j-x|\}$
		\ENDFOR
	\ENDFOR 
	\FOR{$j=1$ to $k$}	
		\STATE $z_i \leftarrow Central(C_i)$ ; where $Central(C_i)$ returns the node with minimum total distance in the class of consideration.
	\ENDFOR 
\UNTIL $|cost(T_t) - cost(T_{t+1})| = 0$
\RETURN $T= \{C_1, C_2,\dots, C_k \}$

\end{algorithmic}
\caption{k-central algorithm}
\label{alg2}
\end{algorithm}

\subsubsection{Convergence}
\label{convergence} Convergence of the network community detection algorithms are the least studied research areas of network science. However, the rate of convergence is one of the important issues and low rate of convergence is the major pitfall of the most of the existing algorithms. Due to the transformation into the metric space, our algorithm equipped with the quick convergence facility of the k-partitioning on metric space by providing a good set of initial points. Another crucial pitfall suffer by majority of the existing algorithms is the validation of the objective function in each iteration during convergence. Our algorithm converges automatically to the optimal partition thus reduces the cost of validation during convergence. 

\begin{Theorem}
During the course of the $k$ center partitioning algorithm, the cost monotonically decreases.
\end{Theorem}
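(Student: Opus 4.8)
The plan is the classical monotonicity argument for Lloyd-type alternating minimisation, adapted to the discrete ($1$-median restricted to nodes) centres produced by $\mathrm{Central}(\cdot)$. First I would make the objective explicit: for a partition $T=\{C_1,\dots,C_k\}$ carrying centres $Z=\{z_1,\dots,z_k\}$ put
$$\Phi(T,Z)=\sum_{i=1}^{k}\sum_{x\in C_i} d(x,z_i),$$
and let $cost(T)$ be this quantity evaluated at the centres attached to $T$ at the current point of the execution of Algorithm~\ref{alg2}; by non-negativity of the metric $d$ we always have $\Phi\geq 0$. The claim to establish is that the sequence $cost(T_0)\geq cost(T_1)\geq cost(T_2)\geq\cdots$ produced by successive passes through the repeat-loop is non-increasing, and in fact strictly decreasing until the halting test $|cost(T_t)-cost(T_{t+1})|=0$ fires.

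Next I would split one iteration into its two moves and bound each separately. In the \emph{assignment step}, in which each $C_i$ is recomputed as the set of nodes for which $z_i$ is a nearest centre, the centres $Z_t$ are held fixed, so the resulting partition satisfies $T_{t+1}\in\arg\min_T\Phi(T,Z_t)$; in particular $\Phi(T_{t+1},Z_t)\leq\Phi(T_t,Z_t)$. (A node equidistant from several centres must be assigned by a fixed deterministic tie-breaking rule so that $T_{t+1}$ is genuinely a partition; any such rule still yields a cost-minimal assignment, so the inequality is unaffected.) In the \emph{update step}, in which each $z_i$ is recomputed as $\mathrm{Central}(C_i)=\arg\min_{z\in C_i}\sum_{x\in C_i} d(x,z)$, the partition $T_{t+1}$ is held fixed; since the previous centre $z_i^{(t)}$ lies in its own block $C_i^{(t+1)}$ after reassignment (because $d(z_i^{(t)},z_i^{(t)})=0$ forces it there), $z_i^{(t)}$ is an admissible candidate for that minimisation, whence $\Phi(T_{t+1},Z_{t+1})\leq\Phi(T_{t+1},Z_t)$. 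Chaining the two bounds gives $cost(T_{t+1})=\Phi(T_{t+1},Z_{t+1})\leq\Phi(T_t,Z_t)=cost(T_t)$, which is the assertion; since $cost(\cdot)$ is then determined by the partition alone and there are only finitely many $k$-block partitions of $V$, a strictly decreasing run can visit each partition at most once, so the stated termination criterion is reached in finitely many steps.

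The only delicate point is the update step — specifically, the need for the old centre to survive as a feasible choice for $\mathrm{Central}$, which is exactly what legitimises $\Phi(T_{t+1},Z_{t+1})\leq\Phi(T_{t+1},Z_t)$. This is automatic for $t\geq 1$, because from the first update onward every centre is a node and hence lands in its own cluster during the following assignment; but it must be noted that the initial centres $z_1,\dots,z_k\in R^n$ of the first line need not be nodes, so one either begins the monotonicity count after the first update or adopts the harmless convention that $cost(\cdot)$ is tracked from the first node-valued centre configuration. The remaining ingredients — non-negativity of $d$, the two $\arg\min$ characterisations, and the finite-range argument for termination — are routine, and the boundedness-below plus monotonicity also yields convergence of the cost sequence, which is what makes the \textsc{until} condition meaningful.
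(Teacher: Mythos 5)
Your proof is correct and takes essentially the same route as the paper's: the standard alternating-minimisation argument, bounding the assignment step with the centres $Z^t$ held fixed and the re-centering step with the new partition held fixed, then chaining $cost(T^{t+1},Z^{t+1}) \leq cost(T^{t+1},Z^t) \leq cost(T^t,Z^t)$. You additionally tidy up details the paper's two-line proof glosses over — the explicit objective $\Phi$, deterministic tie-breaking, the feasibility of the old centre in the cluster-restricted $Central(\cdot)$ minimisation (the paper loosely says ``mean''), and the non-node initial centres — which refines but does not change the argument.
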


\begin{proof}
Let $Z^t = \{z_1^t ,\dots , z_k^t\}$ , $T^t = \{ C_1^t ,\dots, C_k^t\}$ denote the centers and clusters at the start of the  $t^{th}$ iteration of $k$ partitioning algorithm. The first step of the iteration assigns each data point to its closest center; therefore $cost(T^{t+1},Z^t) \leq cost(T^t,Z^t) $

On the second step, each cluster is re-centered at its mean; therefore $cost(T^{t+1},Z^{t+1}) \leq cost(T^{t+1},Z^t)$ 

\end{proof}

\section{Experiments and results}
\label{nn-res}
We performed many of experiments to test the performance of nearest neighbor search based community detection method for complex network over several real networks\ref{cndata}. Objective of the experiment is to verify behavior of the algorithm and the time required to compute the algorithm. One of the major goals of the experiment is to verify the behavior of the algorithm with respect to the performance of other popular methods
exists in the literature with respect to the standard measures like conductance and modularity.
Experiments are conducted to compare the results (tables \ref{nnres}, \ref{nnresmod} and \ref{nnrestime}) of our algorithm with the state of the art algorithms (table \ref{cnalgo}) available in the literature in terms of common measures mostly used by the researchers of the domain of network community detection. The details of the several experiments and the analysis of the results are given in the following subsections.

\subsection{Experimental designs}
Experiment for comparison: In this experiment we have compared several algorithms for network community detection with our proposed algorithm developed using nearest neighbor search in complex network. Experiment is performed on a large list of network data sets. Two version of the experiment is developed for comparison purpose based on two different quality measure conductance and modularity. The results are shown in the tables \ref{nnres} and \ref{nnresmod} respectively.

Experiment on performance and time: In this experiment we have evaluated our algorithm for performance on the network collection\ref{cndata}. We have evaluated the time taken by our algorithm on different size of networks and is shown in the table \ref{nnrestime}.


\subsection{Performance indicator}
Modularity: The notion of modularity is the most popular for the network community detection purpose. 
The modularity index assigns high scores to communities whose internal edges are more than that expected in a random-network model which preserves the degree distribution of the given network.

Conductance: Conductance is widely used for graph partitioning literature. 
The conductance of a set $S$ with complement $S^{C}$ is the ratio of the number of edges connecting nodes in $S$ to nodes in $S^{C}$ by the total number of edges incident to $S$ or to $S^{C}$ (whichever number is smaller).

\subsection{Datasets}

A list of real networks taken from several real life interactions is considered for our experiments and they are tabulate \ref{cndata} below. We have also listed the number of nodes, number of edges, average diameter, data complexity for community detection (DCC) and the k value used (\ref{k-selection}). The values of the last column can be used to assess the quality of detected communities.

\begin{table}[!htbp]
\caption{Complex network datasets and values of their parameters}
\label{cndata}
\centering
\tiny
\begin{tabular}{|l || c l l ||c | c|} 
 \hline
 Name & Type  & \# Nodes & \# Edges & Diameter & k \\ [0.5ex] 
 \hline\hline
DBLP	 	& U & 317,080	& 1,049,866 	& 8   &  268 \\\hline	
Arxiv-AstroPh	& U & 18,772	& 396,160 	& 5   &  23 \\\hline	
web-Stanford	& D & 281,903 	& 2,312,497 	& 9.7 &  69 \\\hline	
Facebook 	& U & 4,039 	& 88,234	& 4.7 &  164 \\\hline	
Gplus 		& D & 107,614 	& 13,673,453 	& 3   &  457 \\\hline	
Twitter		& D & 81,306 	& 1,768,149 	& 4.5 &  213 \\\hline	
Epinions1 	& D & 75,879 	& 508,837 	& 5   &  128  \\\hline	
LiveJournal1 	& D & 4,847,571 & 68,993,773 	& 6.5 &  117 \\\hline	
Orkut		& U & 3,072,441	& 117,185,083 	& 4.8 &  756  \\\hline	
Youtube		& U & 1,134,890 & 2,987,624 	& 6.5 &  811 \\\hline	
Pokec 		& D & 1,632,803 & 30,622,564 	& 5.2 &  246 \\\hline	
Slashdot0811 	& D & 77,360 	& 905,468 	& 4.7 &  81  \\\hline	
Slashdot0922 	& D & 82,168 	& 948,464 	& 4.7 &  87  \\\hline	
Friendster	& U & 65,608,366& 1,806,067,135 & 5.8 &  833 \\\hline	
Amazon0601	& D & 403,394	& 3,387,388	& 7.6 &  92 \\\hline	
P2P-Gnutella31	& D & 62,586	& 147,892 	& 6.5 &  35 \\\hline	
RoadNet-CA	& U & 1,965,206 & 5,533,214 	& 500 &  322 \\\hline	
Wiki-Vote 	& D & 7,115 	& 103,689 	& 3.8 &  21 \\ [1ex] 	
 \hline 
\end{tabular}
\end{table}

\subsection{Computational results}
In this subsection we have compared two groups of algorithms for network community detection with our proposed algorithm using nearest neighbor search. Experiment is performed on a large list of network data sets. Two version of the experiment is developed for comparison purpose based on two different quality measure conductance and modularity. The results based on conductance is shown in the table \ref{nnres}  and the results based on modularity is shown in the table \ref{nnresmod}, respectively. Regarding the two groups of algorithms; first group contain algorithms based on semi-definite programming and the second group contain algorithms based on graph traversal approaches. For each group, we have taken the best value of conductance in table \ref{nnres} and best value of modularity in table \ref{nnresmod} among all the algorithms in the groups. The results obtained with our approach are very competitive with most of the well known algorithms in the literature and this is justified over the large collection of datasets. On the other hand, it can be observed that time taken (table \ref{nnrestime}) by our algorithm is quite less compared to other methods and justify the theoretical findings. 

\begin{table}[h!]
\caption{Comparison of our approaches with other best methods in terms of conductance}
\label{nnres}
\centering
\tiny
\begin{tabular}{|l || c c c ||c c c |} 
 \hline
 Name 		& Spectral  & SDP & GT & Index & M-tree & LSH 			 \\ [0.5ex] 
 \hline\hline
Facebook   	&    0.0097 &  0.1074 &  0.1044 &  0.1082 &  0.0827 &  0.0340  \\\hline
Gplus    	&    0.0119 &  0.1593 &  0.1544 &  0.1602 &  0.1207 &  0.0500  \\\hline
Twitter   	&    0.0035 &  0.0480 &  0.0465 &  0.0483 &  0.0363 &  0.0150  \\\hline
Epinions1   	&    0.0087 &  0.1247 &  0.1208 &  0.1254 &  0.0941 &  0.0390  \\\hline
LiveJournal1   	&    0.0039 &  0.0703 &  0.0680 &  0.0706 &  0.0523 &  0.0218  \\\hline
Pokec    	&    0.0009 &  0.0174 &  0.0168 &  0.0175 &  0.0129 &  0.0054  \\\hline
Slashdot0811   	&    0.0005 &  0.0097 &  0.0094 &  0.0098 &  0.0072 &  0.0030  \\\hline
Slashdot0922   	&    0.0007 &  0.0138 &  0.0133 &  0.0138 &  0.0102 &  0.0043  \\\hline
Friendster  	&    0.0012 &  0.0273 &  0.0263 &  0.0273 &  0.0200 &  0.0084  \\\hline
Orkut   	&    0.0016 &  0.0411 &  0.0397 &  0.0412 &  0.0300 &  0.0126  \\\hline
Youtube   	&    0.0031 &  0.0869 &  0.0838 &  0.0871 &  0.0633 &  0.0267  \\\hline
DBLP  		&    0.0007 &  0.0210 &  0.0203 &  0.0211 &  0.0152 &  0.0064  \\\hline
Arxiv-AstroPh  	&    0.0024 &  0.0929 &  0.0895 &  0.0931 &  0.0669 &  0.0283  \\\hline
web-Stanford  	&    0.0007 &  0.0320 &  0.0308 &  0.0320 &  0.0229 &  0.0097  \\\hline
Amazon0601  	&    0.0018 &  0.0899 &  0.0865 &  0.0900 &  0.0643 &  0.0273  \\\hline
P2P-Gnutella31  &    0.0009 &  0.0522 &  0.0503 &  0.0523 &  0.0373 &  0.0158  \\\hline
RoadNet-CA  	&    0.0024 &  0.1502 &  0.1445 &  0.1504 &  0.1070 &  0.0455  \\\hline
Wiki-Vote   	&    0.0026 &  0.1853 &  0.1783 &  0.1855 &  0.1318 &  0.0561  \\\hline
 \hline
\end{tabular}
\end{table}

\begin{table}[h!]
\caption{Comparison of our approaches with other best methods in terms of modularity}
\label{nnresmod}
\centering
\tiny
\begin{tabular}{|l || c c c ||c c c |} 
 \hline
 Name 		& Spectral  & SDP & GT & Index & M-tree & LSH 			 \\ [0.5ex] 
 \hline\hline
Facebook   	& 0.4487 & 0.5464 & 0.5434 & 0.5472 & 0.5450 &	0.5421  \\\hline
Gplus   	& 0.2573 & 0.4047 & 0.3998 & 0.4056 & 0.4041 &	0.4021  \\\hline
Twitter   	& 0.3261 & 0.3706 & 0.3691 & 0.3709 & 0.3692 &	0.3669  \\\hline
Epinions1  	& 0.0280 & 0.1440 & 0.1401 & 0.1447 & 0.1443 &	0.1437 \\\hline
LiveJournal1  	& 0.0791 & 0.1455 & 0.1432 & 0.1458 & 0.1450 &	0.1439 \\\hline
Pokec   	& 0.0129 & 0.0294 & 0.0288 & 0.0295 & 0.0292 &	0.0287  \\\hline
Slashdot0811   	& 0.0038 & 0.0130 & 0.0127 & 0.0131 & 0.0129 &	0.0127  \\\hline
Slashdot0922   	& 0.0045 & 0.0176 & 0.0171 & 0.0176 & 0.0174 &	0.0172  \\\hline
Friendster  	& 0.0275 & 0.0536 & 0.0526 & 0.0536 & 0.0531 &	0.0525 \\\hline
Orkut   	& 0.0294 & 0.0689 & 0.0675 & 0.0690 & 0.0685 &	0.0678  \\\hline
Youtube   	& 0.0096 & 0.0934 & 0.0903 & 0.0936 & 0.0934 &	0.0930  \\\hline
DBLP 		& 0.4011 & 0.4214 & 0.4207 & 0.4215 & 0.4196 &	0.4171  \\\hline
Arxiv-AstroPh   & 0.4174 & 0.5079 & 0.5045 & 0.5081 & 0.5061 &	0.5035 \\\hline
web-Stanford   	& 0.3595 & 0.3908 & 0.3896 & 0.3908 & 0.3890 &	0.3866  \\\hline
Amazon0601 	& 0.1768 & 0.2649 & 0.2615 & 0.2650 & 0.2637 &	0.2621  \\\hline
P2P-Gnutella31  & 0.0009 & 0.0522 & 0.0503 & 0.0523 & 0.0523 &	0.0523 \\\hline
RoadNet-CA 	& 0.0212 & 0.1690 & 0.1633 & 0.1692 & 0.1680 &	0.1664 \\\hline
Wiki-Vote 	& 0.0266 & 0.2093 & 0.2023 & 0.2095 & 0.2090 &	0.2083  \\[1ex] 
 \hline
\end{tabular}
\end{table}

\begin{table}[!htbp]
\caption{Comparison of our approaches with other best methods in terms of time}
\label{nnrestime}
\centering
\tiny
\begin{tabular}{|l || c c c ||c c c |} 
 \hline
 Name 		& Spectral  & SDP & GT & Index & M-tree & LSH 			 \\ [0.5ex] 
 \hline\hline
Facebook   	&  6 &  7 &  11 &  6 &  4 &  1  \\\hline
Gplus   	&  797 &  832 &  1342 &  661 &  390 &  115  \\\hline
Twitter   	&  462 &  485 &  786 &  398 &  235 &  68  \\\hline
Epinions1  	&  411 &  419 &  667 &  292 &  174 &  56 \\\hline
LiveJournal1  	&  1297 &  1332 &  2129 &  969 &  576 &  179 \\\hline
Pokec   	&  1281 &  1305 &  2075 &  901 &  538 &  173  \\\hline
Slashdot0811   	&  552 &  561 &  891 &  382 &  228 &  74  \\\hline
Slashdot0922   	&  561 &  570 &  906 &  389 &  232 &  75  \\\hline
Friendster  	&  2061 &  2105 &  3352 &  1477 &  880 &  280 \\\hline
Orkut   	&  1497 &  1529 &  2435 &  1074 &  640 &  203  \\\hline
Youtube   	&  829 &  844 &  1340 &  578 &  345 &  111  \\\hline
DBLP 		&  381 &  403 &  655 &  341 &  201 &  57  \\\hline
Arxiv-AstroPh   &  217 &  230 &  375 &  197 &  116 &  33 \\\hline
web-Stanford   	&  498 &  525 &  852 &  437 &  258 &  74  \\\hline
Amazon0601 	&  653 &  678 &  1089 &  520 &  308 &  93  \\\hline
P2P-Gnutella31  &  182 &  184 &  293 &  124 &  74 &  24 \\\hline
RoadNet-CA 	&  758 &  785 &  1261 &  599 &  355 &  107 \\\hline
Wiki-Vote 	&  54 &  55 &  88 &  39 &  23 &  7  \\[1ex] 
 \hline
\end{tabular}
\end{table}

\subsection{Results analysis and achievements}
In this subsection, we have described the analysis of the results obtained in our experiments shown above and also highlighted the achievements from the results. It is clearly evident from the results shown in the tables \ref{nnres}, \ref{nnresmod} and \ref{nnrestime} that, proposed nearest neighbor based method for network community detection using metric tree and locality sensitive hashing provide very good competitive performance with respect to conductance and modularity and also in terms of time. It is also evident from the results that our methods provide case base solution of network community detection depending on the requirements of time or better conductance/modularity.


\section{Conclusions}
In this paper, we have studied the interesting problem of
nearest neighbor queries in complex networks.
Processing nearest neighbor search in complex networks cannot be
achieved by straightforward applications of previous
approaches for the Euclidean space due to the
complexity of graph traversal based computations of node nearness as opposed to
geometric distances. We presented the transformation of graph to metric space
and efficient computation of nearest neighbor therein using metric tree and locality sensitive hashing. 
Our techniques can be applied for various structural analysis of complex network using geometric approaches. 
To validate the performance of proposed nearest neighbor search designed for complex networks we applied
our approaches on community detection problem. The results obtained on several network data sets prove the usefulness of the proposed method and provide motivation for further application of other structural analysis
of complex network using nearest neighbor search.


\acknowledgments{Acknowledgments}

This work is supported by the Jaypee University of Information Technology.


\authorcontributions{Author Contributions}

Suman Saha proposed the algorithm and prepared the manuscript. S.P. Ghrera was in charge of the
overall research and critical revision of the paper.


\conflictofinterests{Conflicts of Interest}

The authors declare no conflict of interest. 

\bibliographystyle{mdpi}
\makeatletter
\renewcommand\@biblabel[1]{#1. }
\makeatother




%


%

\end{document}